\newcommand{\sign}[1]{\text{sign}\left( #1 \right)}
\newcommand{\sat}[1]{\text{sat}\left( #1 \right)}
\newcommand{\etal}{\textit{et al.}}
\newcommand{\wrt}{w.r.t.}
\newcommand{\Rkreis}{\textsuperscript{\textregistered}}
\newcommand{\sgnpow}[2]{\lfloor #1 \rceil^#2}
\newcommand{\Brogliato}{\text{Brogliato}}
\newcommand{\Koch}{\text{Koch}}
\newcommand{\Xiong}{\text{Xiong}}
\newcommand{\Hanan}{\text{Hanan}}
\newcommand{\proposed}{\text{proposed}}
\newcommand{\explicit}{\text{explicit}}
\newcommand{\nonum}{\nonumber \\}
\newtheorem{theorem}{\textbf{Theorem}}
\newtheorem{proposition}[theorem]{\textbf{Proposition}}
\newcounter{casecounter}
\newcounter{subcasecounter}
\renewcommand{\thecasecounter}{\arabic{casecounter}}
\renewcommand{\thesubcasecounter}{\arabic{casecounter}.\alph{subcasecounter}}
\newenvironment{textcases}[0]{}{
	\setcounter{casecounter}{0}
}
\newenvironment{case}[2]{
	\refstepcounter{casecounter}
	\ifnum\pdfstrcmp{#2}{}=1
	\label{#2}
	\fi
	\noindent\textbf{Case \thecasecounter:} #1
}{
	\setcounter{subcasecounter}{0}
}
\newenvironment{subcase}[2]{
	\refstepcounter{subcasecounter}
	\ifnum\pdfstrcmp{#2}{}=1
	\label{#2}
	\fi
	\textbf{Case \thesubcasecounter:} #1
}{}
\def\underbracex#1#2{\mathop{\vtop{\m@th\ialign{##\crcr
				$\hfil\displaystyle{#2}\hfil$\crcr
				\noalign{\kern3\p@\nointerlineskip}%
				#1\crcr\noalign{\kern3\p@}}}}\limits}
\def\upbracefilla{$\m@th \setbox\z@\hbox{$\braceld$}%
	\bracelu\leaders\vrule \@height\ht\z@ \@depth\z@\hfill 
	\kern\p@\vrule \@width\p@\kern\p@\vrule \@width\p@\kern\p@\vrule \@width\p@
	$}
\def\upbracefillb{$\m@th \setbox\z@\hbox{$\braceld$}%
	\vrule \@width\p@\kern\p@\vrule \@width\p@\kern\p@\vrule \@width\p@\kern\p@
	\leaders\vrule \@height\ht\z@ \@depth\z@\hfill\bracerd
	\braceld\leaders\vrule \@height\ht\z@ \@depth\z@\hfill
	\kern\p@\vrule \@width\p@\kern\p@\vrule \@width\p@\kern\p@\vrule \@width\p@
	$}
\def\upbracefillc{$\m@th \setbox\z@\hbox{$\braceld$}%
	\vrule \@width\p@\kern\p@\vrule \@width\p@\kern\p@\vrule \@width\p@\kern\p@
	\leaders\vrule \@height\ht\z@ \@depth\z@\hfill
	\kern\p@\vrule \@width\p@\kern\p@\vrule \@width\p@\kern\p@\vrule \@width\p@
	$}
\def\upbracefilld{$\m@th \setbox\z@\hbox{$\braceld$}%
	\vrule \@width\p@\kern\p@\vrule \@width\p@\kern\p@\vrule \@width\p@\kern\p@
	\leaders\vrule \@height\ht\z@ \@depth\z@\hfill\braceru$}
\def\upbracefillbd{$\m@th \setbox\z@\hbox{$\braceld$}%
	\vrule \@width\p@\kern\p@\vrule \@width\p@\kern\p@\vrule \@width\p@\kern\p@
	\bracerd\braceld
	\leaders\vrule \@height\ht\z@ \@depth\z@\hfill\braceru$}
\newcommand{\revision}[1]{#1}
\begin{document}
	\title{Modified Implicit Discretization of the Super-Twisting Controller}
	\author{Benedikt Andritsch, Lars Watermann, Stefan Koch, Markus Reichhartinger, Johann Reger, \\and Martin Horn
	\thanks{
		This work has been submitted to the IEEE for possible publication. Copyright may be transferred without notice, after which this version may no longer be accessible.\\
		The financial support by the Austrian Science Fund (FWF) grant no.
		I 4152, the Austrian Federal Ministry for Digital and Economic Affairs,
		the National Foundation for Research, Technology and Development and
		the Christian Doppler Research Association is gratefully acknowledged.
		The second and fifth author further acknowledge the financial support by
		the German Research Foundation (DFG), project no. 416911519, and by
		the European Union Horizon 2020 research and innovation program under
		Marie Skłodowska-Curie grant no. 824046.}
	\thanks{B. Andritsch and M. Reichhartinger are with the Institute of Automation
		and Control at Graz University of Technology, Graz, Austria.
		(email: benedikt.andritsch@tugraz.at, markus.reichhartinger@tugraz.at)}
	\thanks{S. Koch and M. Horn are with the Christian Doppler Laboratory
		for Model-Based Control of Complex Test Bed Systems, Institute of
		Automation and Control, Graz University of Technology, Graz, Austria.
		(email: stefan.koch@tugraz.at, martin.horn@tugraz.at)}
	\thanks{L. Watermann and Johann Reger are with the Control Engineering Group at
		Technische Universität Ilmenau, Ilmenau, Germany. (email: lars.watermann@tu-ilmenau.de, johann.reger@tu-ilmenau.de)}
	}
	
	\maketitle
	
	\graphicspath{{figures}} 
	
	\begin{abstract}
		In this paper a novel discrete-time realization of the super-twisting controller is proposed. The closed-loop system is proven to \revision{converge to an invariant set around the origin in finite time.}
		Furthermore, the steady-state error is shown to be independent of the controller gains. It only depends on the sampling time and the unknown disturbance.
		The proposed discrete-time controller is \revision{ evaluated comparative} to previously published discrete-time super-twisting controllers by means of the controller structure \revision{and in} extensive simulation studies.
		The continuous-time super-twisting controller is capable of rejecting any unknown Lipschitz-continuous perturbation \revision{and converges in finite time}. Furthermore, the convergence time decreases, if any of the gains is increased.
		\revision{The simulations demonstrate that the closed-loop systems with each of the known controllers lose one of these properties, introduce discretization-chattering, or do not yield the same accuracy level as with the proposed controller.}
		The proposed controller, in contrast, is beneficial in terms of the above described properties.
	\end{abstract}
	
	\begin{IEEEkeywords}
		Backward Euler discretization, Discrete-time control, Implicit discretization, Sliding mode control, Super-twisting algorithm, Super-twisting control
	\end{IEEEkeywords}
	\section{Introduction}
	\label{sec:introduction}
	The field of Sliding Mode (SM) Control (SMC) has proven to be of high importance when considering systems with unknown disturbances~\cite{Shtessel2013}. In continuous-time, SMC manages to completely reject any disturbances that fulfill some requirements like boundedness or Lipschitz-continuity. However, SM controllers are mostly implemented on discrete-time hardware, requiring appropriate representations of these controllers. Discrete-time SM controllers have to deal with unpleasant effects like discretization-chattering, which diminishes the advantageous properties of SMC~\cite{Acary2012, Levant2011}. One of the first techniques in conventional SMC~\cite{Shtessel2013} avoiding discretization-chattering is the implicit discretization~\cite{Acary2010}. 
	
	A famous continuous-time SM system is the Super-Twisting Algorithm (STA)~\cite{LEVANT1993, Moreno2012}. 
	The STA is capable of rejecting Lipschitz-continuous disturbances, which is of high interest in real-world control problems~\cite{Chen2016, Pizzo2017}. Therefore, a proper discrete-time implementation of the Super-Twisting Controller (STC) is essential for many applications.
	There have been different approaches to achieve an implicitly discretized version of the STC~\cite{Brogliato2018,Xiong2022}. Also, non-implicit discretization techniques \revision{can be} applied to the STC, e.g. the matching approach~\cite{Koch2019} \revision{and the low-chattering discretization~\cite{Hanan2022}}.
	
	\revision{The continuous-time STC has the following properties:
	\begin{itemize}
		\item reject Lipschitz-continuous perturbations~\cite{LEVANT1993},
		\item finite-time convergence~\cite{LEVANT1993} and
		\item increasing any controller parameter reduces the convergence time~\cite{Utkin_2013}.
	\end{itemize}
	Furthermore, the following are desired properties of the discrete-time controller:
	\begin{itemize}
		\item no discretization-chattering occurs, i.e. the control error vanishes when no disturbance and no measurement noise are present,
		\item the steady-state error, i.e. the control error after all transients died out, is proportional to the discretization time squared, as in~\cite{Livne_2014}, and
		\item the steady-state accuracy is insensitive to controller parameters.
	\end{itemize}
	With the last property, the controller parameters can be selected solely on requirements regarding convergence time and control signal magnitude, and do not have to consider a trade-off with the accuracy of the controller.}
	\revision{Each of the existing discretizations of the STC fails to resemble some of these properties. Therefore, in this paper a novel discretization of the STC is presented, that unites all above-mentioned features. Measurement noise is not investigated in this paper.}
	
	\revision{At first}, an overview of existing discrete-time versions of the STC is given in Section~\ref{sec:related-work}. Then, in Section~\ref{sec:proposed} a novel implicit discretization of the STC is presented. Further, stability properties of the presented discretization are analyzed. Finally, in extensive simulation studies in Section~\ref{sec:simulation} it is demonstrated that the proposed controller preserves all crucial properties of the continuous-time STC, in contrast to previously published controllers.
	
	\subsection*{Mathematical Notations}
	Let 
	\begin{align}
		\sign{x} \in \begin{cases}
			\revision{\left\{1\right\}} & \revision{\text{if } x > 0,} \\
			\revision{\left\{-1\right\}} & \revision{\text{if } x < 0,} \\
			[-1,1] & \text{if } x = 0
		\end{cases}
	\end{align}
	be the signum function with $x \in \mathbb{R}$. Furthermore, the signed power function
	$
		\sgnpow{x}{y} = \sign{x} |x|^y,
	$
	with $x, y \in \mathbb{R}$ will be used. Note that $\sgnpow{x}{0} = \sign{x}$.
	Finally, let
	$
		\sat{x} = \begin{cases} x \quad \text{if } |x| < 1 \\ \sign{x} \quad \text{else} \end{cases}
	$ be the saturation function.

	\section{Related Work}\label{sec:related-work}
	
	The STC considers the \revision{dynamics of the sliding variable $x_1$ with relative degree~1 of an affine-input dynamic system~\cite{LEVANT1993},~\cite{SlidingModes2012}, i.e.}
	\begin{align}\label{ct-plant}
		\dot x_1 &=u + \varphi, \nonum
		\dot \varphi &=\Delta.
	\end{align}
	\revision{System~\eqref{ct-plant} is denoted as plant and $x_1$ and the perturbation $\varphi$ as plant states in the following. The remaining terms are} the control signal $u$ and the \revision{Lebesgue-measurable} unknown disturbance $\Delta(t)$ with $|\Delta\revision{(t)}| < L~\revision{\forall t}$ \revision{and} some \revision{known} constant $L$. \revision{Due to the bounded derivative} $\varphi$ is Lipschitz-continuous.
	
	The dynamic \revision{sliding-mode} controller 
	\begin{align}\label{ct-controller}
		u &=-\alpha \sgnpow{x_1}{\frac{1}{2}} + \nu, \nonum
		\dot \nu &=-\beta \sgnpow{x_1}{0},
	\end{align}
	with the controller state $\nu$ is known as STC and stabilizes $x_1 = 0$ of~\eqref{ct-plant} if the constant gains $\alpha$ and $\beta~\revision{ > L}$ are chosen accordingly~\cite{LEVANT1993}.
	The closed-loop system 
	\begin{align}\label{ct-sta}
		\dot x_1 &=-\alpha \sgnpow{x_1}{\frac{1}{2}} + x_2, \nonum
		\dot x_2 &=-\beta \sgnpow{x_1}{0} + \Delta,
	\end{align}
	with \revision{$x_2 = \varphi + \nu$} resulting from the plant~\eqref{ct-plant} and the STC~\eqref{ct-controller} is called STA.
	
	The goal of implementing the STC is to determine a discrete-time representation of the controller~\eqref{ct-controller}, which can generally be written as
	\begin{align}\label{dt-general-controller}
		u_k &=-\alpha \Psi_1(x_{1,k}) + \nu_{k+1}, \nonum
		\nu_{k+1} &=\nu_k - h \beta \Psi_2(x_{1,k}),
	\end{align}
	with the state dependent functions $\Psi_1$ and $\Psi_2$, the \revision{constant} discretization-time $h$, the known discrete-time system state $x_{1,k} = x_1 (k h)$, and $k=1,2,\dots$. \revision{Note that measurement noise in the state $x_{1,k}$ is not investigated in this paper.} The discrete-time control variable $u_k$ is then fed to the continuous-time system through a zero-order hold element, i.e. $u(t) = u_k \text{ for } kh \leq t < (k+1)h$. Note that $u_k$ contains the controller-state at $k+1$, i.e. $\nu_{k+1}$, as in this paper mainly implicit discretization approaches are considered. 
	Denote by $\mathcal{C}_*$ the controller resulting from~\eqref{dt-general-controller} and specific controller functions $\Psi_{1,*}$ and $\Psi_{2,*}$.
	In the following, several discrete-time realizations of the STC are presented.
	
	\subsection{Implicit Discretization}
	One discrete-time STC was published in~\cite{Brogliato2018, Brogliato2020} by Brogliato \etal\ and can be written as
	\begin{align}\label{dt-brogliato}
		&\revision{\Psi_{1,\Brogliato}} =\nonum &=\revision{\sign{x_{1,k}} \left( -\frac{h \alpha}{2} + \sqrt{\frac{h^2 \alpha^2}{4} + \text{max}(0, |x_{1,k} + h \nu_k| - h^2 \beta)}\right)}, \nonum
		&\Psi_{2,\Brogliato} = \sat{\frac{x_{1,k} + h \nu_k}{h^2 \beta}}.
	\end{align}
	Note that $\Psi_{1,\Brogliato}$ depends not only on the plant state $x_{1,k}$, but also on the controller state $\nu_k$. The authors use an implicit discretization approach to establish the explicitly given discrete-time controller functions~\eqref{dt-brogliato}. It is proven that the undisturbed closed-loop system is globally asymptotically stable. The authors introduce the sliding variable $x_{1,k} + h \nu_k$, which is driven to zero and maintained there. Also, $\mathcal{C}_{\Brogliato}$ drives the closed-loop state $\varphi_k + \nu_k$ to zero. 
	
	However, let us assume an unbounded perturbation $\varphi_k$, e.g. due to a constant disturbance $\Delta_k$. Then $\varphi_k$ will grow, and thus, also $\nu_k$ will grow. With the sliding variable kept at the origin, therefore also $x_{1,k}$ will grow and the control goal $x_{1,k} = 0$ can not be maintained. Therefore, the controller $\mathcal{C}_\Brogliato$ is not able to reject an unbounded perturbation $\varphi$, which reduces the class of disturbances $\Delta$ that can be handled by the controller compared to the continuous-time STC. These thoughts will be discussed in simulations in Section~\ref{sec:simulation} as well.
	
	\subsection{Discretization Based on Matching Approach}
	Another discrete-time implementation of the controller \eqref{ct-controller} is presented in~\cite{Koch2019,Koch2019a} by Koch \etal\ The authors utilize the matching approach to establish a discrete-time controller, resulting in the controller functions
	\begin{align}\label{dt-koch}
		\Psi_{1,\Koch} &=-\frac{1}{\alpha h} \left(e^{\frac{p_1 h}{\sqrt{|x_{1,k}|}}} + e^{\frac{p_2 h}{\sqrt{|x_{1,k}|}}} - 2\right) x_{1,k} - \frac{h \beta}{\alpha} \Psi_{2,\Koch}, \nonum
		\Psi_{2,\Koch} &=\frac{1}{h^2\beta} \left(e^{\frac{p_1 h}{\sqrt{|x_{1,k}|}}}-1\right) \left(e^{\frac{p_2 h}{\sqrt{|x_{1,k}|}}}-1\right) x_{1,k},
	\end{align}
	with $p_{1,2} = -\frac{\alpha}{2} \pm \sqrt{\frac{\alpha^2}{4} - \beta}$. The discrete-time closed-loop system is shown to avoid discretization-chattering effects and to be globally asymptotically stable in the disturbance-free case. Note that $\Psi_{1,\Koch}$ differs from the function in~\cite{Koch2019} due to the different definition of the general discrete-time controller \eqref{dt-general-controller}.
	
	\subsection{Semi-Implicit Discretization}
	The third known discrete-time version of the STC that is considered in this paper was published in~\cite{Xiong2022} by Xiong \etal\ and is obtained by a semi-implicit discretization. It consists of the controller functions
	\begin{align}\label{dt-xiong}
		\Psi_{1,\Xiong} &=\frac{1}{h \alpha} D_k \sat{\frac{x_{1,k}}{D_k}}, \nonum
		\Psi_{2,\Xiong} &=\sat{\frac{x_{1,k}}{D_k}},
	\end{align}
	where
	$
		D_k = \begin{cases} h \alpha |x_{1,k}|^{\frac{1}{2}} + h^2 \beta & \text{if } |x_{1,k}| > h \alpha |x_{1,k}|^{\frac{1}{2}} + h^2 \beta \\ h^2 \beta & \text{else}. \end{cases}
	$
	The authors show that the controller is insensitive to an overestimation of the gains regarding the asymptotic accuracy of the closed-loop system.
	
	\subsection{Low-Chattering Discretization}
	Finally, the last considered discrete-time STC is derived from the low-chattering differentiator presented in~\cite{Hanan2022}. The controller functions take the form
	
	\begin{align}\label{dt-hanan}
		\Psi_{1,\Hanan} &=\sat{\frac{|x_{1,k}|}{\revision{\gamma} h^2}}^\frac{1}{2} \sgnpow{x_{1,k}}{\frac{1}{2}} - \frac{h\beta}{\alpha} \sat{\frac{x_{1,k}}{\revision{\gamma} h^2}}, \nonum
		\Psi_{2,\Hanan} &=\sat{\frac{x_{1,k}}{\revision{\gamma} h^2}}.
	\end{align}
	The derivation of this discrete-time representation of the STC \revision{and the selection of $\gamma$} are given in Appendix~\ref{app:derivation-hanan}.

	Fig.~\ref{fig:controller-functions}a and~\ref{fig:controller-functions}b show the discrete-time controller functions $\Psi_{1,j}(x_1)$ respective $\Psi_{2,j}(x_1)$ from~\eqref{dt-brogliato},~\eqref{dt-koch},~\eqref{dt-xiong} and~\eqref{dt-hanan} with $j\in\{\Brogliato, \Koch, \Xiong, \Hanan\}$, as well as the functions $\sgnpow{x_1}{\frac{1}{2}}$ respective $\sgnpow{x_1}{0}$ from the continuous-time controller~\eqref{ct-controller}. The parameters were chosen as $h=1$, $\beta=1$, $\alpha=\sqrt{2\beta}$. For the computation of $\Psi_{1,\Brogliato}(x_1)$ and $\Psi_{2,\Brogliato}(x_1)$, $\nu_k$ was assumed to be zero. This is the case in steady state in the absence of a disturbance.
	Fig.~\ref{fig:controller-functions}a shows two regions within $x_1$, where $\Psi_{1,\Xiong}$ is constant. Between these regions, $\Psi_{1,\Xiong}$ is linear. The size of the constant regions depends on the parameters $\alpha$ and $\beta$ and is large \wrt\ the linear region, when $\alpha$ is large compared to $\beta$. The effect of this linear region will be discussed in Section~\ref{sec:simulation}.
	\revision{Note that Fig.~\ref{fig:controller-functions} helps to get an intuitive understanding of the controllers.}
	
	\begin{figure}
		\centering
		\includegraphics[width=\linewidth]{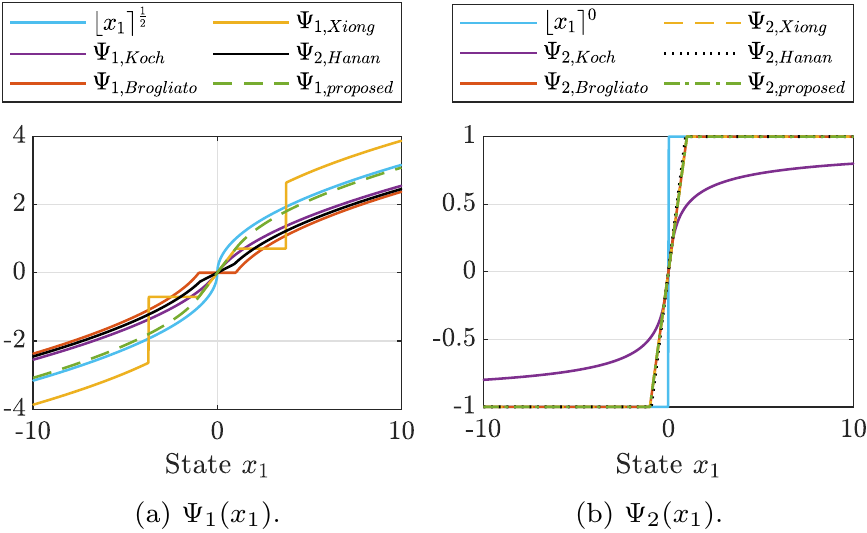}
		\caption{Compared controller functions.}
		\label{fig:controller-functions}
	\end{figure}
	
	\section{Proposed Discrete-Time STC}\label{sec:proposed}
	Sampling the continuous-time state $x_1(t)$ of system~\eqref{ct-plant} with $u(t) = u_k~\forall t\in[kh,(k+1)h)$ results in \revision{${x_1((k+1)h)} = x_1(kh) + h u_k + {\int_{kh}^{(k+1)h} \varphi(\tau)\mathrm{d}\tau}.$}
	\revision{Defining the discrete-time state $\varphi_k \coloneqq \frac{1}{h}\int_{kh}^{(k+1)h}\varphi(\tau)\mathrm{d}\tau$ and the unknown discrete-time input $\Delta_k \coloneqq \frac{1}{h}(\varphi_{k+1} - \varphi_k)$ yields
		\begin{align}~\label{dt-disturbance-integral}
			\Delta_k &= \frac{1}{h^2} \left(\int_{(k+1)h}^{(k+2)h}\varphi(\tau)\mathrm{d}\tau - \int_{kh}^{(k+1)h}\varphi(\tau)\mathrm{d}\tau\right) \nonum
			&= \frac{1}{h^2} \int_{kh}^{(k+1)h}\varphi(\tau+h)-\varphi(\tau)\mathrm{d}\tau.
		\end{align}
		As $|\dot \varphi(t)| = |\Delta(t)| \leq L~\forall t>0$, $\revision{|}\varphi(t+h)-\varphi(t)\revision{|} \leq Lh$ holds and thus $|\Delta_k| \leq L$. Defining $x_{1,k} \coloneqq x_1(kh)$ yields the discrete-time plant model
		\begin{align}\label{dt-plant}
			x_{1,k+1} &=x_{1,k} + h u_{k} + h \varphi_{k}, \nonum
			\varphi_{k+1} &=\varphi_k + h \Delta_{k}.
		\end{align}
		The state $\varphi_k$ can be interpreted as the mean value of $\varphi(t)$ in the interval $t \in [kh,(k+1)h)$\revision{. Also, } $\varphi_k$ as well as $\Delta_k$ are virtual values\revision{, and not samples of the continuous-time signals $\Delta(t)$ and $\varphi(t)$}.
		Note that \revision{even though}~\eqref{dt-plant} is structurally equivalent to an Euler forward discretization of~\eqref{ct-plant} the state $x_{1,k}$ coincides with the samples $x_1(kh)$.}
	
	%
	
	The novel discrete-time STC functions
	\begin{align}\label{dt-proposed}
		\Psi_{1,\proposed} &=\sign{x_{1,k}} \left( \frac{h \beta}{\alpha}\sat{\frac{|x_{1,k}|}{h^2 \beta}} - \frac{h\alpha}{2} + \right. \nonum
		&\quad\left. + \sqrt{\frac{h^2\alpha^2}{4} + \text{max}(0, |x_{1,k}| - h^2 \beta)}\right), \nonum
		\Psi_{2,\proposed} &=\sat{\frac{x_{1,k}}{h^2 \beta}}
	\end{align}
	are proposed in this paper for the discrete-time plant model~\eqref{dt-plant}. \revision{It is worth noting} that the controller functions~\eqref{dt-proposed} resemble the terms of the implicit controller functions~\eqref{dt-brogliato}, with $x_{1,k}+h\nu_k$ replaced by $x_{1,k}$, and extended by the saturation term in the first equation. The controller can therefore be regarded as a modified implicitly discretized STC.
	Fig.~\ref{fig:controller-functions}a and~\ref{fig:controller-functions}b also show the controller functions in~\eqref{dt-proposed}. Note that $\Psi_{2,\proposed}$, $\Psi_{2,\Brogliato}$ and $\Psi_{2,\Xiong}$ coincide. Furthermore, $\Psi_{1,\proposed}$ and $\Psi_{1,\Xiong}$ coincide near the origin, i.e. at $|x_1| \leq h^2\beta$.

	\begin{figure}[h]
		\centering
		\includegraphics[width=1\linewidth]{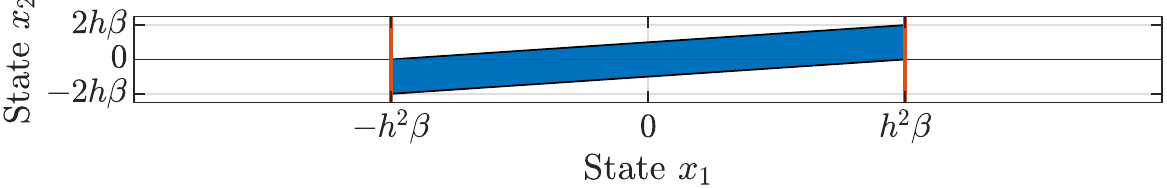}
		\caption{Invariant set $\mathcal{M}$ of the closed-loop system.}
		\label{fig:roc}
	\end{figure}
	
	Let the unknown virtual state $x_{2,k}$ be defined as $x_{2,k} \coloneqq \nu_k + \varphi_k$. The closed-loop system resulting from the discrete-time plant~\eqref{dt-plant} and the controller $\mathcal{C}_\proposed$ is
	\begin{align}\label{dt-closedloop-system}
		x_{1,k+1} &=x_{1,k} -2 h^2\beta \sat{\frac{x_{1,k}}{h^2\beta}} - h\alpha \sign{x_{1,k}}\left( -\frac{h \alpha}{2} + \right.\nonum
		&\quad\left.+\sqrt{\frac{h^2\alpha^2}{4} + \max(0, |x_{1,k}| - h^2\beta)} \right) + h x_{2,k}, \nonum
		x_{2,k+1} &=x_{2,k} - h \beta \sat{\frac{x_{1,k}}{h^2\beta}} + h \Delta_{k}.
	\end{align}
	
	In the following, the stability properties of the discrete-time STA~\eqref{dt-closedloop-system} are examined. \revision{For this define $\mathcal{M} = \{(x_{1,k},x_{2,k}) \in \mathbb{R}^2 | |x_{1,k}| \leq h^2 \beta, |hx_{2,k} - x_{1,k}| \leq h^2\beta\}$. $\mathcal{M}$ is plotted in Fig.~\ref{fig:roc} in state-space as a blue area.}
	
	\begin{proposition}\label{proposition-1}
		Consider the closed-loop system~\eqref{dt-closedloop-system} with the Lipschitz constant $L$, i.e. $|\Delta_k| \leq L~\forall k$\revision{, and $\beta > L$}. 
		\revision{Then $\mathcal{M}$ is a forward invariant set and} if \revision{$x_k \in \mathcal{M}$} is fulfilled for some $k = K$, the steady-state error \revision{is limited, i.e. $\limsup_{k\geq K+2} |x_{1,k}| \leq h^2 L$}. Further, the closed-loop system is exact in the absence of a disturbance, i.e. the state $x_{1,k}$ converges to zero. Thus, discretization-chattering is completely avoided.
	\end{proposition}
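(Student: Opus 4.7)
The plan is to observe that inside $\mathcal{M}$ every nonlinearity in~\eqref{dt-closedloop-system} collapses, so the closed-loop map reduces to a simple linear recursion plus disturbance, from which invariance and the steady-state bound can be read off directly. First I would note that for $x_k \in \mathcal{M}$ we have $|x_{1,k}| \leq h^2\beta$, which implies $\sat{x_{1,k}/(h^2\beta)} = x_{1,k}/(h^2\beta)$ and $\max(0, |x_{1,k}| - h^2\beta) = 0$, so the square-root term equals $h\alpha/2$ and the whole $\sign{\cdot}$-bracket in the $x_1$-update vanishes (independently of any set-valued choice at $x_{1,k}=0$). Substituting, the dynamics in $\mathcal{M}$ simplify to
\begin{align}
x_{1,k+1} &= h x_{2,k} - x_{1,k}, \\
x_{2,k+1} &= x_{2,k} - \tfrac{1}{h}\, x_{1,k} + h\Delta_k .
\end{align}

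Next I would prove forward invariance by checking the two defining inequalities of $\mathcal{M}$. Directly from the first equation, $|x_{1,k+1}| = |h x_{2,k} - x_{1,k}| \leq h^2\beta$, which is exactly the second constraint for $x_k$, so the first constraint at $k+1$ follows. For the second constraint at $k+1$, a straightforward cancellation yields
\begin{align}
h x_{2,k+1} - x_{1,k+1} = h^2 \Delta_k ,
\end{align}
so $|h x_{2,k+1} - x_{1,k+1}| \leq h^2 L < h^2 \beta$ by the assumption $\beta > L$. Hence $x_{k+1} \in \mathcal{M}$, and by induction $\mathcal{M}$ is forward invariant.

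The steady-state bound then comes from composing the two equations once more. Using the simplified $x_1$-update at time $k+1$ together with the identity just derived gives
\begin{align}
x_{1,k+2} = h x_{2,k+1} - x_{1,k+1} = h^2 \Delta_k,
\end{align}
so $|x_{1,k+2}| \leq h^2 L$ for every $k \geq K$, which is exactly $\limsup_{k\geq K+2} |x_{1,k}| \leq h^2 L$. In the disturbance-free case $\Delta_k \equiv 0$ the same identity gives $x_{1,k+2} = 0$ and also $h x_{2,k+2} - x_{1,k+2} = 0$, so $x_{2,k+2} = 0$; the state reaches the origin exactly in (at most) two steps after entering $\mathcal{M}$, ruling out discretization-chattering.

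I do not expect a genuine obstacle here: the proposition is essentially a structural observation, and the only subtlety is to argue that the $\sign$-multiset ambiguity at $x_{1,k}=0$ is irrelevant because it is multiplied by a factor that is zero throughout $\mathcal{M}$. The condition $\beta>L$ is used in exactly one place, namely to guarantee $h^2 L \leq h^2\beta$ so that the $|h x_{2,k+1} - x_{1,k+1}|$-bound stays inside $\mathcal{M}$. Global convergence to $\mathcal{M}$ is not claimed in this proposition and is presumably addressed separately.
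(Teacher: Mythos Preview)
Your argument is correct and follows essentially the same route as the paper: both simplify the dynamics inside $\mathcal{M}$ to the linear recursion $x_{1,k+1}=-x_{1,k}+hx_{2,k}$, $x_{2,k+1}=x_{2,k}-x_{1,k}/h+h\Delta_k$ and then use the resulting identity $hx_{2,k+1}-x_{1,k+1}=h^2\Delta_k$ to obtain invariance and the bound $x_{1,k+2}=h^2\Delta_k$. The only cosmetic difference is that the paper packages the simplified system as a dead-beat controller (system matrix with both eigenvalues zero), whereas you verify the two defining inequalities of $\mathcal{M}$ directly.
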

	
	\begin{proof}
		Assume $|x_{1,k}| \leq h^2 \beta$. Then \revision{the controller~\eqref{dt-general-controller} with}~\eqref{dt-proposed} can be simplified to
		\begin{align}\label{dt-proposed-vicinity}
			u_{k} &=\revision{-\frac{1}{h} x_{1,k} + \nu_{k+1},} \nonum
			\nu_{k+1} &=\nu_k - \frac{1}{h} x_{1,k}\revision{,}
		\end{align}
		\revision{which in explicit form yields $u_k = -\frac{2}{h} x_{1,k} + \nu_{k}$.}
		
		The second-order closed-loop system resulting from~\eqref{dt-plant} and~\eqref{dt-proposed-vicinity} \revision{in matrix-form} is then given by
		\begin{align}\label{dt-closedloop-vic}
			\begin{bmatrix} x_{1,k+1} \\ x_{2,k+1} \end{bmatrix} = \underbrace{\begin{bmatrix} -1 & h \\ -\frac{1}{h} & 1 \end{bmatrix}}_{\revision{M}} \begin{bmatrix} x_{1,k} \\ x_{2,k} \end{bmatrix} + \begin{bmatrix} 0 \\ h \end{bmatrix} \Delta_{k}.
		\end{align}
		The eigenvalues of the system-matrix \revision{$M$} are both zero, which means that~\eqref{dt-closedloop-vic} is a second-order dead-beat system~\cite{EmamiNaeini1982}. Thus, the steady state is reached after two steps.
		
		The discrete-time controller acts as a dead-beat controller, whenever $|x_{1,k}| \leq h^2 \beta$.
		In order to reach steady state, the dead-beat controller~\eqref{dt-proposed-vicinity} must be applied to the plant two times consecutively. Thus, to reach steady state $|x_{1,k}| \leq h^2 \beta$ and $|x_{1,k+1}| \leq h^2 \beta$ must hold. According to~\eqref{dt-closedloop-vic} \revision{$|x_{1,k+1}| \leq h^2 \beta \quad \Leftrightarrow \quad |h x_{2,k} - x_{1,k}| \leq h^2 \beta$}\revision{, which corresponds to $x_k \in \mathcal{M}$}.
		\revision{Assume $x_{K} \in \mathcal M$. Then~\eqref{dt-closedloop-vic} gives $|x_{1,K+2}| = {|-x_{1,K+1} + h x_{2,K+1}|} = |-(-x_{1,K}+hx_{2,K}) + (-x_{1,K}+hx_{2,K})+h^2\Delta_K| = |h^2\Delta_K| \leq h^2 L < h^2\beta$. Therefore, $\mathcal{M}$ is forward invariant for system~\eqref{dt-closedloop-system} and $x_{1,K+2+i} = h^2\Delta_{K+i}~\forall i\geq0$.}
	\end{proof}

	\begin{theorem}\label{theorem-1}
		\revision{Let $L \geq 0$, $|\Delta_k| \leq L~\forall k$, ${V>0}$ and discretization time $h > 0$. Given parameters 
		${\alpha > 0}$, ${\beta > \max\left(4L, 
		\frac{5}{7}\frac{\sqrt{V}}{h}, \sqrt{L^2+\frac{2L\sqrt{V}}{h^2}}\right)}$,
		and initial states $(x_{1,0}, x_{2,0})$ fulfilling $V \geq V_0 \coloneqq 2\beta |x_{1,0} - h x_{2,0}| + x_{2,0}^2$. Then, the states of system~\eqref{dt-closedloop-system} converge to the set $\mathcal{M}$ in finite time and remain there.
		}
		\revision{Further, in the absence of a disturbance, i.e. $L=0$, and if $\alpha>0$, $\beta>0$ the origin of system~\eqref{dt-closedloop-system} is globally finite-time stable.}
	\end{theorem}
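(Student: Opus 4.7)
The plan is to establish finite-time entry into $\mathcal{M}$ via a discrete Lyapunov argument using the candidate $V_k \coloneqq 2\beta|s_k| + x_{2,k}^2$ with sliding variable $s_k \coloneqq x_{1,k} - h x_{2,k}$, and then to deduce the disturbance-free statement by combining the Lyapunov decay with the dead-beat structure from Proposition~\ref{proposition-1}. The choice of $V_k$ is natural since $V_0$ matches the quantity in the hypothesis and $\mathcal{M}$ is exactly the joint constraint $|x_{1,k}|\leq h^2\beta$, $|s_k|\leq h^2\beta$; as Proposition~\ref{proposition-1} already grants forward invariance of $\mathcal{M}$, only finite-time entry must be shown.

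I would first split the analysis according to the piecewise definition of the controller. In the linear regime $|x_{1,k}|\leq h^2\beta$, equation~\eqref{dt-closedloop-vic} yields $x_{1,k+1}=-s_k$ and $s_{k+1}=-h^2\Delta_k$, so either $x_{k+1}\in\mathcal{M}$ immediately or $|s_k|>h^2\beta$ and the trajectory re-enters the square-root regime the next step with $|s_{k+1}|\leq h^2L$. In the square-root regime $|x_{1,k}|>h^2\beta$, the algebraic identity $\bigl(y_k+\tfrac{h^2\alpha^2}{2}\bigr)^2 - h^2\alpha^2\bigl(\tfrac{h^2\alpha^2}{4}+y_k\bigr) = y_k^2$ with $y_k\coloneqq|x_{1,k}|-h^2\beta$ gives the compact update
\begin{equation*}
s_{k+1} = \sign{x_{1,k}}\,\frac{y_k^2}{y_k+\tfrac{h^2\alpha^2}{2}+h\alpha\sqrt{\tfrac{h^2\alpha^2}{4}+y_k}} - h^2\Delta_k,
\end{equation*}
whose deterministic summand is sign-aligned with $x_{1,k}$ and has magnitude strictly less than $y_k$.

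Next I would compute $V_{k+1}-V_k$ directly, combining $2\beta(|s_{k+1}|-|s_k|)$ with $x_{2,k+1}^2-x_{2,k}^2 = -2h\beta x_{2,k}\sign{x_{1,k}}+h^2\beta^2+2hx_{2,k}\Delta_k-2h^2\beta\sign{x_{1,k}}\Delta_k+h^2\Delta_k^2$, and proceed by induction on $V_k\leq V$ so that $|x_{2,k}|\leq\sqrt{V}$ and $|\Delta_k|\leq L$ can be used to absorb all non-negative contributions into a strictly negative bound. The three hypotheses on $\beta$ should play distinct roles: $\beta>4L$ enforces net negativity near the boundary $|x_{1,k}|\approx h^2\beta$, where the deterministic drop becomes small; $\beta>\tfrac{5}{7}\sqrt{V}/h$ controls the sign-indefinite crossterm $-2h\beta x_{2,k}\sign{x_{1,k}}$ relative to $2\beta|s_k|$ and $h^2\beta^2$; and $\beta>\sqrt{L^2+2L\sqrt{V}/h^2}$ guarantees $\sign{s_{k+1}}=\sign{x_{1,k}}$, removing absolute-value case distinctions. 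The main obstacle I foresee is precisely this bookkeeping, especially in the small-$y_k$ regime, where the sliding-variable drop scales as $y_k^2/(h^2\alpha^2)$ and the whole decrement must be delivered by the $x_{2,k}^2$-update, which is not sign-definite against $\sign{x_{1,k}}$ a priori.

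Once a uniform bound $V_{k+1}-V_k\leq-\eta<0$ is established outside $\mathcal{M}$, the state enters $\mathcal{M}$ in at most $\lceil V/\eta\rceil$ steps and Proposition~\ref{proposition-1} ensures invariance thereafter. For the disturbance-free global claim, setting $L=0$ eliminates every positive term in the $V$-update, and the residual sign-indefinite crossterm $-2h\beta x_{2,k}\sign{x_{1,k}}$ is dominated by the deterministic sliding-variable reduction whenever $|s_k|>h^2\beta$; a strict decrement of order at least $h^2\beta^2$ per square-root-regime step, together with the finite-time behaviour inherited from the linear regime, delivers entry to $\mathcal{M}$ from any initial state. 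Inside $\mathcal{M}$, the system matrix $M$ in~\eqref{dt-closedloop-vic} satisfies $M^2=0$, so $x_{K+2}=0$, establishing global finite-time stability of the origin.
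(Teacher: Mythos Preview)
Your Lyapunov candidate and regime split are exactly those of the paper, and your algebraic identity for $s_{k+1}$ in the square-root regime is correct and equivalent to the paper's computation $z_{1,k}-A\geq 0$. The genuine gap is the target itself: the uniform one-step bound $V_{k+1}-V_k\leq -\eta<0$ that you plan to establish does \emph{not} hold outside $\mathcal{M}$ in the perturbed case. In the linear regime $|x_{1,k}|\leq h^2\beta$ with $\sign{x_{1,k}}=-\sign{x_{2,k}}$ and $x_{1,k}\neq 0$ (a subcase you do not isolate), the crossterm $-x_{1,k}x_{2,k}/h=+|x_{1,k}||x_{2,k}|/h$ is positive and one can only reach $\Delta V_k\leq(-\beta+2L)|s_k|+|x_{1,k}||x_{2,k}|/h+h^2L^2$, which may be positive when $L>0$. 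Your linear-regime observation (``either $x_{k+1}\in\mathcal{M}$ or the trajectory re-enters the square-root regime with $|s_{k+1}|\leq h^2L$'') is correct but does not track $V$ across that transition. The paper resolves this subcase by bounding $V_{k+2}-V_k$ instead: since $x_{1,k+1}=-s_k$ lies in the square-root regime with $\sign{x_{1,k+1}}=-\sign{x_{1,k}}$, one can propagate one more step explicitly, and it is precisely in this two-step estimate that the hypothesis $\beta>\tfrac{5}{7}\sqrt{V}/h$ is consumed (to force $|x_{2,k}|\leq\sqrt{V}<\tfrac{7}{5}h\beta\leq h\beta(2\beta-L)/(\beta+L)$), not in the square-root-regime crossterm as you conjecture.

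Your attribution of $\beta>\sqrt{L^2+2L\sqrt{V}/h^2}$ is likewise off: in the paper it appears only in the square-root regime, where after introducing $z_{2,k}=x_{2,k}-\sign{x_{1,k}}h\beta$ (which still satisfies $|z_{2,k}|\leq\sqrt{V_k}$) all sign-definite pieces are discarded and the residual $2L|z_{2,k}|-h^2(\beta^2-L^2)$ must be shown negative; no sign determination of $s_{k+1}$ is needed. Correspondingly, the finite-time conclusion in the paper is drawn from the two-step decrement $\Delta V_{2,k}\coloneqq V_{k+2}-V_k$, which is continuous and strictly negative on $\{V_k\leq V\}\setminus\mathcal{M}$ and therefore admits a negative maximum $\delta$, giving the step bound. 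For the undisturbed global statement your sketch is essentially right: with $L=0$ the one-step decrement \emph{is} strictly negative in every case (including the problematic subcase above, where it reduces to $-\beta|x_{1,k}|<0$), so $V_k$ is nonincreasing and the compactness argument applies on $\{V_k\leq V_0\}$ for arbitrary $V_0$.
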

	
	The proof of Theorem~\ref{theorem-1} is in Appendix~\ref{app:proof-theorem-1}.\qed

	\section{Evaluation in Simulation Studies}\label{sec:simulation}
	
	In this section, the results of numerical simulations are presented. 
	All simulations were performed in MATLAB\Rkreis/Simulink\Rkreis. 
	\revision{The plant was simulated in discrete-time according to~\eqref{dt-plant} with the same discretization-time $h$ as the controllers. The discrete-time disturbance $\Delta_k$ for the simulations was computed solving the integral in~\eqref{dt-disturbance-integral} analytically yielding $x_k = x(hk)$. The simulations were performed with a fixed-step solver.}
	In the following, $\Sigma_*$ denotes the simulated closed-loop system consisting of the plant and the discrete-time controller $\mathcal{C}_*$.
	It is shown in what regard $\mathcal{C}_\proposed$ is an improvement to state-of-the-art discrete-time STCs. 
	
	\subsection{\revision{Disturbed Case}}
	
	\begin{figure}
		\centering
		\includegraphics[width=\linewidth]{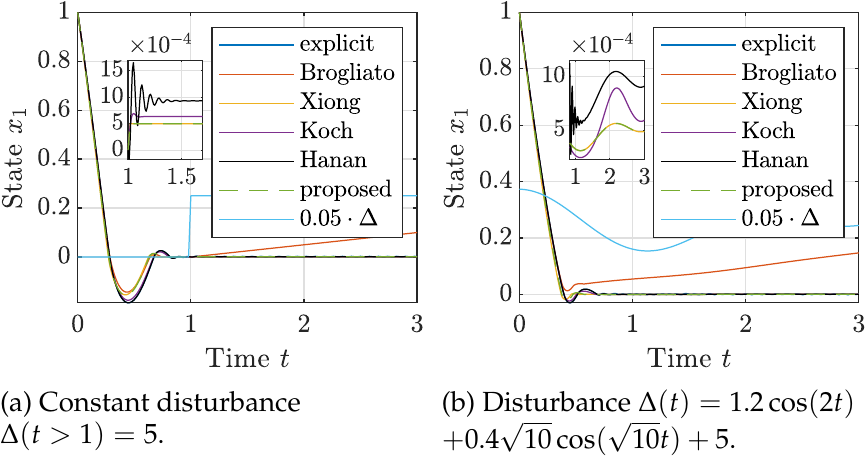}
		\caption{Simulation in time-domain with a disturbance, $\alpha = \sqrt{10}$, $\beta = 10$, $h = 0.01$ and $x_1(0) = 1$.}
		\label{sim:dist}
	\end{figure}
	The first simulation was performed with a constant disturbance $\Delta = 1~\forall t\geq1$ and $\Delta=0~\forall t<1$. The parameters where chosen as $\alpha = \sqrt{10}$, $\beta = 10$, \revision{the discretization time} $h = 0.01$ and $x_1(0) = 1$. The second simulation was performed with the same parameters and a disturbance known from literature~\cite{Koch2019a, Xiong2022} with a constant offset of $5$, $\Delta(t) = 1.2\cos(2t) + 0.4\sqrt{10}\cos(\sqrt{10}t) + 5$. The results are presented in Fig.~\ref{sim:dist}a and~\ref{sim:dist}b. The results clearly show that $\mathcal{C}_\Brogliato$ is not capable of rejecting constant parts of the disturbance $\Delta$, as it was described in Section~\ref{sec:related-work}. All other controllers result in a state $x_1$ converging close to zero. \revision{$\mathcal C_\Hanan$ leads to decaying oscillations in $x_1$.}
	
	\subsection{\revision{Undisturbed Case}}

	\begin{figure*}
		\centering
		\includegraphics[width=1\textwidth]{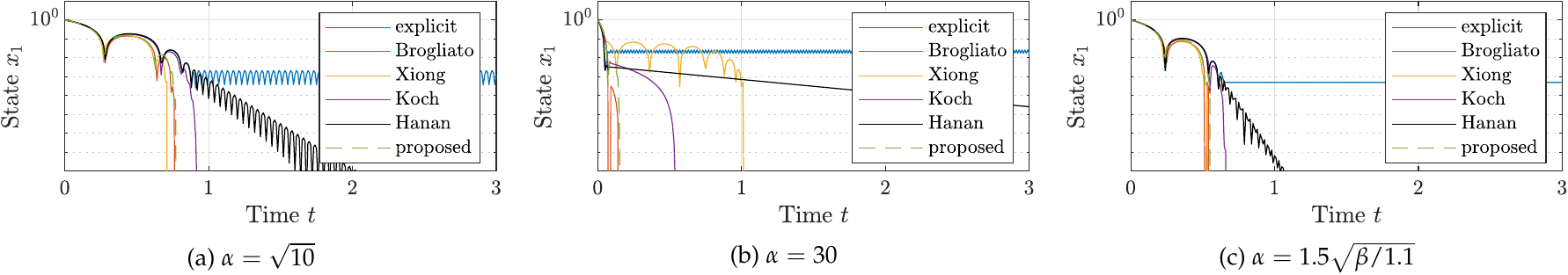}
		\caption{Simulations in time-domain in the undisturbed case, $\beta = 10$, $h = 0.01$ and $x_1(0) = 1$.}
		\label{sim:no-dist}
	\end{figure*}
	
	Three more simulations were performed with no disturbance, i.e. $\Delta \equiv 0$. In Fig.~\ref{sim:no-dist}a $\alpha = \sqrt{10}$ was chosen as before. In Fig.~\ref{sim:no-dist}b the parameter was set to $\alpha = 30$, which is large \wrt\ $\beta$, and in Fig.~\ref{sim:no-dist}c $\alpha = 1.5 \sqrt{\beta/1.1}$ was set according to the recommended parameter choice for $\mathcal{C}_\Hanan$ in~\cite[Fig.~3]{Hanan2022}. The other parameters remained unchanged, i.e. $\beta = 10$, \revision{discretization time} $h = 0.01$ and $x_1(0) = 1$. 
	The state $x_1$ is depicted in absolute values and scaled logarithmically in these plots, in order to emphasize the differences between the results of the controllers. The results show that $\Sigma_\explicit$ is not exact and exhibits discretization-chattering in steady state. \revision{All other} systems converge to zero without discretization-chattering effects. However, $\Sigma_\Hanan$ converges slower than the other systems.
	From the continuous-time STA~\eqref{ct-sta} it is expected that increasing the parameter $\alpha$ leads to faster convergence times. However, increasing $\alpha$ from $\sqrt{10}$ to $30$ in Fig.~\ref{sim:no-dist}b shows an increased convergence time of $\Sigma_\Xiong$. The systems $\Sigma_\proposed$ and $\Sigma_\Brogliato$ converge faster to zero. System $\Sigma_\Koch$ exhibits a larger convergence time than $\Sigma_\proposed$ and $\Sigma_\Brogliato$.
	
	\subsection{Convergence Time when Varying One Parameter}
	
	\begin{figure}
		\centering
		\includegraphics[width=\linewidth]{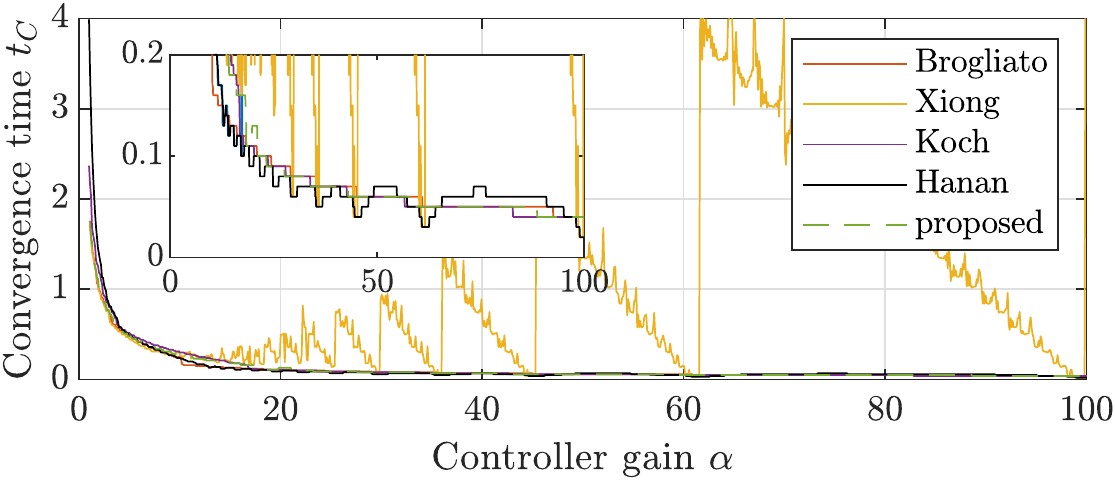}
		\caption{Convergence time $t_C$ over varying parameter $\alpha$.}
		\label{sim:tC-over-alpha}
	\end{figure}
	
	In order to analyze the behavior of increasing convergence times when increasing $\alpha$, the convergence time was determined for several parameter values $\alpha$. 
	\revision{The displayed convergence time $t_C$ is the lowest time for which the absolute state value does not exceed $1\%$ of the initial value, i.e. ${|x_1(t)| \leq 10^{-2} |x_1(0)|}~\forall t \geq t_C$.}
	Fig.~\ref{sim:tC-over-alpha} shows the convergence times of the compared systems over the parameter $\alpha$, which was set to values between $1$ and $100$, i.e. $0.1\beta$ and $10\beta$, in intervals of $0.1$. The other parameters were fixed at $\beta = 10$, $h = 0.01$ and $x_1(0) = 1$. 
	Fig.~\ref{sim:tC-over-alpha} illustrates that the convergence time of $\Sigma_\Xiong$ behaves very sensitive to changes in $\alpha$ when $\alpha > \beta$. Small changes in $\alpha$ can lead to a large increase of the convergence time, e.g. changing $\alpha$ from $29.8$ to $29.9$ results in $t_C$ changing from $0.11$ to $0.87$ (all numbers are rounded).
	The reason for the large convergence times of $\Sigma_\Xiong$ when $\alpha > \beta$ may be connected to the constant regions of the controller function $\Psi_{1,\Xiong}$ in~\eqref{dt-xiong} which is depicted in Fig.~\ref{fig:controller-functions}a. When $\alpha$ is large \wrt\ $\beta$, then the constant regions are larger \wrt\ the linear region of $\Psi_{1,\Xiong}$, as described in Section~\ref{sec:related-work}. The function $\Psi_{1,\Xiong}$ can be interpreted as a rate at which $x_1$ approaches the origin. When this function is constant with a rather small magnitude, this approaching phase could then take longer, the larger this constant region is.
	\revision{$\Sigma_\Hanan$ shows oscillations in the convergence time with very small magnitude, interestingly with a similar frequency as the oscillations in $\Sigma_\Xiong$.}
	
	\subsection{Steady-State Accuracy when Varying the Parameters}
	
	\begin{figure*}
		\centering
		\includegraphics[width=1\textwidth]{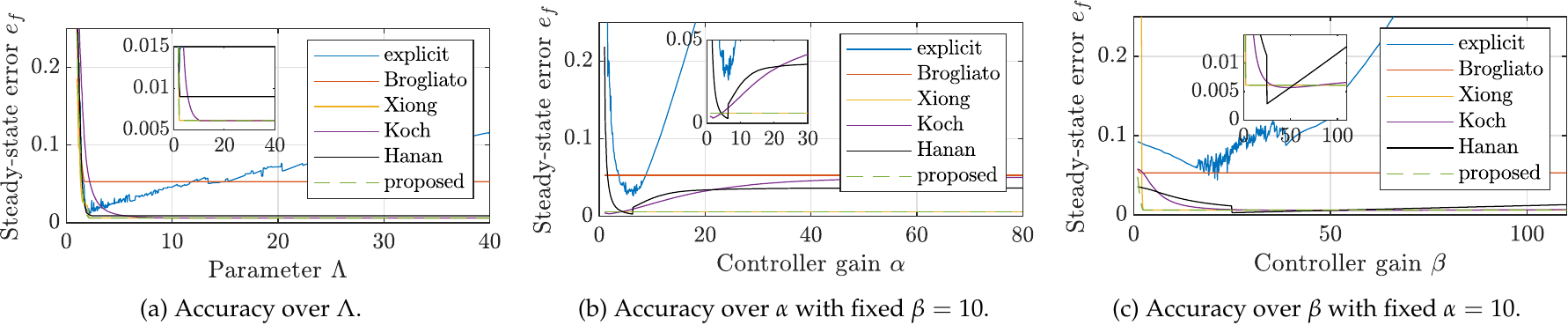}
		\caption{Accuracy by means of the steady-state error $e_f$, $h=0.05$ and $x_1(0) = 0$.}
		\label{sim:ss-error}
	\end{figure*}
	
	Finally, simulations were performed regarding the accuracy of the closed-loop systems, i.e. the remaining steady-state error. Let this steady-state error be defined as $e_f = \limsup_{t} {|x_1|}$, with the initial value $x_1(0) = 0$. The systems were simulated until $t=20$. The disturbance was chosen as $\Delta(t) = 1.2\cos(2t) + 0.4\sqrt{10}\cos(\sqrt{10}t)$, which was also used in~\cite{Koch2019a, Xiong2022}. The discretization time was set to $h=0.05$. Fig.~\ref{sim:ss-error}a depicts $e_f$ over the value $\Lambda$, which determines the parameters $\alpha=1.5 \sqrt{\Lambda}$ and $\beta = 1.1 \Lambda$. This parameter relation corresponds to the suggested parameter choice in~\cite{Hanan2022} and was also applied in~\cite{Xiong2022}, where the same simulation was performed. The value $\Lambda$ was varied between $1$ and $40$ in $1000$ steps. The results from~\cite{Xiong2022} were reproduced, whereas $\Sigma_\proposed$ and $\Sigma_\Hanan$ performed very similar to $\Sigma_\Xiong$. Fig.~\ref{sim:ss-error}b and~\ref{sim:ss-error}c show $e_f$ over varying $\alpha$ and $\beta$ from $1$ to $80$ and $110$, respectively, in $1000$ steps. The second controller parameter was fixed at $10$. The results again show the exactness of $\Sigma_\proposed$ as well as $\Sigma_\Xiong$, which yield the same small steady-state error after some minimal gain values. The system  $\Sigma_\Brogliato$, however, settles at larger errors, due to the appearance of the attenuated perturbation $\varphi$ in $x_1$ in steady state. In the results of $\Sigma_\Koch$, which is not the result of an implicit approach, a dependency between the controller parameters $\alpha$ and $\beta$ and the steady-state error can be observed. Increasing $\alpha$ even drives the steady-state error of $\Sigma_\Koch$ to the same level where $\Sigma_\Brogliato$ settles, as can be seen in Fig.~\ref{sim:ss-error}b.
	The system $\Sigma_\Hanan$ achieves \revision{smaller} steady-state errors \revision{than $\Sigma_\proposed$ and $\Sigma_\Xiong$} within some bounds of $\alpha$ and $\beta$. Outside of these bounds, the accuracy of $\Sigma_\Hanan$ \revision{decreases significantly.}
	
	\subsection{State Trajectories}
	
	\begin{figure}
		\centering
		\includegraphics[width=0.8\linewidth]{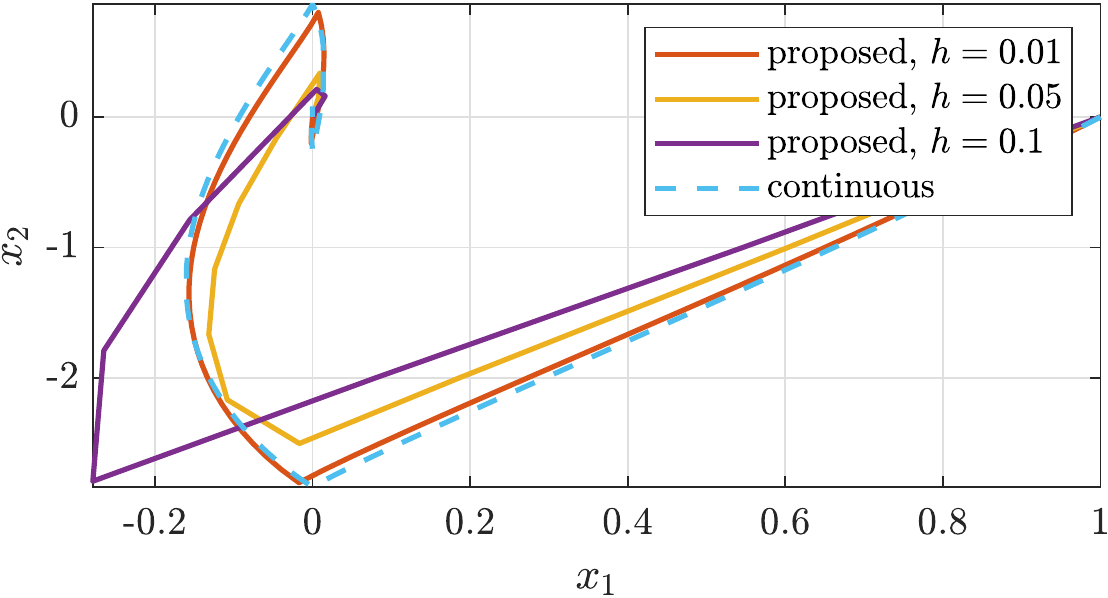}
		\caption{State trajectories with $\Delta \equiv 0$.}
		\label{sim:trajectories}
	\end{figure}

	\revision{	Figure~\ref{sim:trajectories} shows the state trajectories of the proposed algorithm with various discretization times $h$ comparative to the continuous-time algorithm.
	In this simulation, the input $\Delta \equiv 0$ and the parameters were selected as ${\alpha = \sqrt{10}}$, $\beta = 10$, $x_{1,0} = 1$, $x_{2,0} = 0$ and $h\in\{0.01, 0.05, 0.1\}$.
	The continuous-time trajectory was established with an Euler forward discretized STC and a discretization time of $10^{-5}$. 
	In this simulation $\Sigma_\proposed$ yields results similar to the continuous-time trajectory for small $h$. }
	
	\section{Conclusion}\label{sec:conclusion}
	In this paper, a novel discrete-time super-twisting controller is presented.
	It is shown to \revision{converge to an invariant set in finite time}. Additionally, the absence of any discretization-chattering effects is shown.
	The controller is directly compared to previously published discrete-time super-twisting controllers analytically regarding the controller structure as well as in simulation studies. The \revision{analytic considerations and} simulations showed that the presented controller resembles best \revision{several} properties of the continuous-time controller. 
	The proposed discretization \revision{can handle all Lipschitz-continuous perturbations.} Further, its finite convergence time decreases when any of the controller gains is increased. Moreover, the presented controller yields a steady-state error that is independent of the controller gains and it introduces no discretization-chattering effects.
	The proposed discrete-time super-twisting controller unites all of these beneficial properties, in contrast to the known controllers. In future work, the presented controller will be applied to real-world problems.
	
	\appendices
	
	\section{Derivation of the Low-Chattering Disc.}\label{app:derivation-hanan}
	The discrete-time differentiator according to~\cite{Hanan2022} with a differentiation order of $1$ and a filtering order of $0$ is given by
	\begin{align}~\label{differentiator}
		z_{0,k+1} &= z_{0,k} + h z_{1,k} - h \tilde\lambda_1 \hat L^{\frac{1}{2}} \sgnpow{z_{0,k} - f_{0,k}}{\frac{1}{2}} \nonum
		z_{1,k+1} &= z_{1,k} - h \tilde\lambda_0 \hat L \sgnpow{z_{0,k} - f_{0,k}}{0},
	\end{align}
	where $f_{0,k}$ is the discrete-time signal to be differentiated, \revision{$\tilde\lambda_1$} and \revision{$\tilde\lambda_0$} are constant parameters and $z_{0,k}$ and $z_{1,k}$ are the observer states that estimate the signal $f_{0,k}$ and its first derivative $f_{1,k}$, respectively. Further, $\hat L$ is an adaptive parameter following the computation law \revision{$\hat L = L \cdot \sat{\frac{|z_{0,k} - f_{0,k}|}{L k_L h^2}},$}
	with $L$ being the known Lipschitz constant of the unknown signal $f_{1,k}$ and a constant parameter $k_L$. Selecting $\tilde \lambda_0 L = \beta$, $\tilde \lambda_1 L^\frac{1}{2} = \alpha$ and $L k_L = \revision{\gamma}$ yields the error dynamics of the differentiator~\eqref{differentiator}
	\begin{align}\label{error-dynamics-hanan}
		x_{1,k+1} &= x_{1,k} + h x_{2,k} - h \alpha \sat{\frac{|x_{1,k}|}{\revision{\gamma} h^2}}^\frac{1}{2} \sgnpow{x_{1,k}}{\frac{1}{2}} \nonum
		x_{2,k+1} &= x_{2,k} - h \beta \sat{\frac{|x_{1,k}|}{\revision{\gamma} h^2}} \sgnpow{x_{1,k}}{0},
	\end{align}
	with $x_{1,k} = z_{0,k} - f_{0,k}$ and $x_{2,k} = z_{1,k} - f_{1,k}$.
	
	Therefore, by setting \revision{$u_k =- \alpha \sat{\frac{|x_{1,k}|}{\revision{\gamma} h^2}}^\frac{1}{2} \sgnpow{x_{1,k}}{\frac{1}{2}} + h\beta \sat{\frac{x_{1,k}}{\revision{\gamma} h^2}} + \nu_{k+1}$, $\nu_{k+1} =\nu_k - h \beta \sat{\frac{x_{1,k}}{\revision{\gamma} h^2}}$,}
	the closed-loop dynamics of system~\eqref{dt-plant} will follow the dynamics~\eqref{error-dynamics-hanan}, which yields the controller functions in~\eqref{dt-hanan}.
	\revision{This controller has a third tuning parameter $\gamma$. System~\eqref{error-dynamics-hanan} is linear in the band of the saturation around the origin. So, a natural choice of $\gamma$ is such that the eigenvalues of this linear system are in the unit disk. This can be achieved by e.g. selecting ${\gamma = G\begin{cases} \beta^2/\alpha^2 \text{ if } \alpha < 2 \sqrt{\beta}\\ \alpha^2/4 \text{ else,}\end{cases}}$ with $G>1$ which is used in Sections~\ref{sec:related-work} and~\ref{sec:simulation} with $G=1.5^2/1.1^2\approx1.8595$. This value was chosen, as for the relation $\alpha=1.5\sqrt{\Lambda}$, $\beta = 1.1\Lambda$ this yields $\gamma=L$, respective $k_L=1$, which is the recommended parameter value in~\cite{Hanan2022}.}
	
	\section{Proof of Theorem~\ref{theorem-1}}\label{app:proof-theorem-1}
		\revision{If $x_k \in \mathcal{M}$ for some $k$, then Proposition~\ref{proposition-1} applies, and $x_k$ remains in $\mathcal{M}$.
		Otherwise it is proven that $x_k$ converges to $\mathcal{M}$ in finite time. 
		In the following, it is assumed that $x_k \notin \mathcal{M}$.}
		Inspired by the stability analysis in~\cite[Theorem~IV.1]{Koch2019} define the Lyapunov candidate $V_k = 2\beta |x_{1,k} - h x_{2,k}| + x_{2,k}^2$. In general, using~\eqref{dt-general-controller} 
		the next step of the Lyapunov candidate computes to
	$
			{V_{k+1} = 2\beta |x_{1,k} - h\alpha \Psi_1(x_{1,k})| + (x_{2,k} - h \beta \Psi_2(x_{1,k}) \revision{+ h\Delta_k})^2.}$
		For the proposed controller, this yields
		\begin{align}
			V_{k+1} &=2\beta \left|x_{1,k} - h^2\beta \sat{\frac{x_{1,k}}{h^2\beta}} - h\alpha \sign{x_{1,k}} \cdot \right.\nonum
			&\quad\cdot\left.\left( -\frac{h \alpha}{2} + \sqrt{\frac{h^2\alpha^2}{4} + \max(0, |x_{1,k}| - h^2\beta)} \right)\right| + \nonum
			&\quad+ \left(x_{2,k} - h \beta \sat{\frac{x_{1,k}}{h^2\beta}} \revision{+ h\Delta_k}\right)^2.
		\end{align}
		
		\begin{textcases}
			\begin{case}{$|x_{1,k}| \leq h^2\beta$}{} gives
				\begin{align}
					V_{k+1} &= 2\beta \left|x_{1,k} - x_{1,k} - h\alpha \sign{x_{1,k}} \left( -\frac{h \alpha}{2} + \sqrt{\frac{h^2\alpha^2}{4}} \right)\right| \nonum
					&\quad + \left(x_{2,k} - \frac{1}{h} x_{1,k} \revision{+ h\Delta_k}\right)^2 = \left(x_{2,k} - \frac{1}{h} x_{1,k} \revision{+ h\Delta_k}\right)^2.
				\end{align}
				The difference $\Delta V_k = V_{k+1} - V_k$ then computes to
				\begin{align}
					\Delta V_k &= -2\beta |x_{1,k} - h x_{2,k}| + \left(x_{2,k} - \frac{1}{h} x_{1,k}\revision{+ h\Delta_k}\right)^2 - x_{2,k}^2 = \nonum
					&= -2\beta |x_{1,k} - h x_{2,k}| + \frac{x_{1,k}}{h^2} (x_{1,k} - h x_{2,k}) - \frac{x_{1,k}x_{2,k}}{h} + \nonum
					&\quad \revision{ + h^2\Delta_k^2 - 2\Delta_k (x_{1,k} - h x_{2,k})}\nonum
					&\revision{\leq (-\beta + 2L) |x_{1,k} - h x_{2,k}| - \frac{x_{1,k}x_{2,k}}{h} + h^2L^2},
				\end{align}
				\revision{with the limit established with $|x_{1,k}| \leq h^2\beta$ and $|\Delta_k| \leq L$. 
					From $x_k \notin \mathcal{M}$ and $|x_{1,k}| \leq h^2\beta$ follows $|x_{1,k+1}| = |x_{1,k} - hx_{2,k}| > h^2\beta$ and $|x_{2,k}| > 0$.
					Two cases are distinguished.}
				
				\begin{subcase}{$\sign{x_{1,k}} = \sign{x_{2,k}}$ or $x_{1,k} = 0$}{case1a} yields
					\begin{align}
						\Delta V_k &\leq \left( -\beta \revision{ + 2L}\right) \underbrace{|x_{1,k} - h x_{2,k}|}_{> h^2\beta} - \frac{|x_{1,k}x_{2,k}|}{h} \revision{+ h^2L^2} \leq\nonum
						&\revision{\leq (-\beta + 2L)h^2\beta + h^2L^2 = h^2(L^2 - \beta^2 + 2L\beta) < 0},
					\end{align}
					where the last inequality is fulfilled due to \revision{$\beta > 4L$}.
				\end{subcase}
			
				\begin{subcase}{$\sign{x_{1,k}} = -\sign{x_{2,k}}$ and $x_{1,k} \neq 0$}{case_1b} \\
					This case gives $|x_{1,k} - hx_{2,k}| > |hx_{2,k}|$. Assume $L=0$. Then, with $|x_{1,k}| \leq h^2\beta$
					\begin{align}
						\Delta V_k &< -h\beta |x_{2,k}| + \frac{|x_{1,k}x_{2,k}|}{h}\leq -h\beta |x_{2,k}| + h\beta|x_{2,k}| = 0.
					\end{align}
					Now, assume ${L>0}$, which gives with ${H \coloneqq  -\frac{h^2\alpha^2}{2} + \sqrt{\frac{h^4\alpha^4}{4} + h^2\alpha^2 (|x_{1,k+1}| - h^2\beta)}  > 0}$
					\begin{align}
						V_{k+2} 
						&= 2\beta\left||x_{1,k+1}| - h^2\beta - H\right| + (x_{2,k+1} - \sgnpow{x_{1,k+1}}{0}h\beta)^2 +\nonum 
						&\quad + 2\Delta_{k+1}(hx_{2,k+1} - \sgnpow{x_{1,k+1}}{0}h^2\beta) + h^2\Delta_{k+1}^2.
					\end{align}
					With $x_{1,k+1} = -x_{1,k} + hx_{2,k}$ and $x_{2,k+1} = -\frac{x_{1,k}}{h} + x_{2,k} + h\Delta_k$ from~(14) and $\sgnpow{x_{1,k+1}}{0} = - \sgnpow{x_{1,k}}{0}$ from $\sign{x_{1,k}} = -\sign{x_{2,k}}$ this further computes to
					\begin{align}
						V_{k+2} &= 2\beta\left|-|x_{1,k}| - h|x_{2,k}| + h^2\beta + H\right| +\nonum
						&\quad + \left(-\frac{x_{1,k}}{h} +x_{2,k} + \sgnpow{x_{1,k}}{0}h\beta + h\Delta_k\right)^2 +\nonum
						&\quad + 2\Delta_{k+1}(-x_{1,k}+hx_{2,k}+\sgnpow{x_{1,k}}{0}h^2\beta + h^2\Delta_k) + h^2\Delta_{k+1}^2
					\end{align}
					Without loss of generality assume $x_{1,k} > 0$, i.e. $x_{2,k} < 0$, in the remaining part of this case. Thus, with $|\Delta_k| \leq L$, $|\Delta_{k+1}| \leq L$ and $-x_{1,k} + hx_{2,k} + h^2\beta < 0$ the upper limit
					\begin{align}
						V_{k+2} &\leq 2\beta\left|hx_{2,k} -x_{1,k} + h^2\beta + H\right| + \left(x_{2,k} -\frac{x_{1,k}}{h} + h\beta - hL\right)^2 - \nonum
						&- 2L (hx_{2,k}-x_{1,k}+h^2\beta - h^2L) + h^2L^2
					\end{align}
					is established.
					It can easily be shown that $-x_{1,k} + hx_{2,k} + h^2\beta + H < 0$.
%
					With some rearranging steps this gives
					\begin{align}
						&V_{k+2} - V_k \leq 
						 -2\beta (h^2\beta + H) + \frac{x_{1,k}^2}{h^2} + 2\frac{|x_{1,k}x_{2,k}|}{h}+\nonum
						&\qquad + (\beta-L)\left(-2x_{1,k} + 2hx_{2,k} + h^2\beta - h^2L\right) + \nonum
						&\qquad + L 2 x_{1,k} - L 2hx_{2,k} - L 2 h^2\beta + L 3h^2L. 
					\end{align}
					With $|x_{1,k}| \leq h^2\beta$ and using the last expression the difference is further limited by
					\begin{align}
						&V_{k+2} - V_k \leq -2\beta (h^2\beta + H) + \beta x_{1,k} - 2h\beta x_{2,k} + \nonum
						&\qquad + 2(\beta-2L)(hx_{2,k}-x_{1,k}) + h^2\beta(\beta-3L) - h^2L(\beta-4L).
					\end{align}
					Using $2(\beta-2L) = (\beta-3L) + \beta -L$ the last expression can be rewritten as
					\begin{align}
						&V_{k+2} - V_k \leq 
						-2\beta (h^2\beta + H) + L x_{1,k} - h(\beta+L) x_{2,k} + \nonum
						&\qquad + (\beta - 3L) (-x_{1,k} + hx_{2,k} + h^2\beta) - h^2L (\beta-4L).
					\end{align}
					Due to \revision{$\beta > 4L$} the term $h^2L(\beta-4L) < 0$, and due to $x_k \notin \mathcal M$ the term $(\beta-3L)(-x_{1,k} + hx_{2,k} + h^2\beta) < 0$. It must be shown that the sum of the remaining terms is negative, i.e. $2\beta H \geq -2h^2\beta^2 + L x_{1,k} - h(L+\beta) x_{2,k}$.
					This is trivial if the right-hand side is negative or zero, i.e. $|x_{2,k}| \leq \frac{h\beta (2\beta - L)}{\beta+L}$.
					It holds that $|x_{2,k}| \leq \sqrt{V_k} \leq \sqrt{V_0} \leq \sqrt{V}$. With \revision{$\beta > 4L$} and \revision{$\beta > \frac{5}{7} \frac{\sqrt{V}}{h}$} we have $|x_{2,k}| \leq \sqrt{V} < \frac{7}{5}h\beta \leq \frac{h\beta (2\beta -L)}{\beta + L}$ which was to be shown.
						Therefore $V_{k+2} < V_k$.
				\end{subcase}
			\end{case}
			
			\begin{case}{$|x_{1,k}| > h^2\beta$}{} yields
				\begin{align}
					&V_{k+1} = 2\beta \left|x_{1,k} - \sign{x_{1,k}} \left(h^2\beta + h\alpha \left( -\frac{h \alpha}{2} + \right.\right.\right. \nonum
					&\quad\left.\left.\left. + \sqrt{\frac{h^2\alpha^2}{4} + |x_{1,k}| - h^2\beta} \right) \right)\right| + \left(x_{2,k} - \sign{x_{1,k}} h\beta\right)^2 - \nonum
					&\quad \revision{ - 2\Delta_k(x_{2,k} - \sign{x_{1,k}} h\beta) + h^2\Delta_k^2}.
				\end{align}\\
				Let us introduce $z_{1,k} \coloneqq x_{1,k} - \sign{x_{1,k}} h^2\beta$, $z_{2,k} \coloneqq x_{2,k} - \sign{x_{1,k}} h\beta$, with $z_{1,k} \in \mathbb{R}\backslash\{0\}$ and $z_{2,k} \in \mathbb{R}$. Note that $\sign{x_{1,k}} = \sign{z_{1,k}}$. This gives
				\begin{align}
					V_k &=2\beta |z_{1,k} - h z_{2,k}| + z_{2,k}^2 + \sign{z_{1,k}} 2 h\beta z_{2,k} + h^2\beta^2, \nonum
					V_{k+1} &=2\beta \left|z_{1,k} - h\alpha \sign{z_{1,k}} \left(-\frac{h \alpha}{2} + \sqrt{\frac{h^2\alpha^2}{4} + |z_{1,k}|} \right)\right| \nonum
					&\quad + z_{2,k}^2 \revision{ - 2\Delta_kz_{2,k} + h^2\Delta_k^2},
				\end{align}
				and further \revision{with $A\coloneqq h\alpha \left(-\frac{h \alpha}{2} + \sqrt{\frac{h^2\alpha^2}{4} + |z_{1,k}|} \right) > 0$}
				\begin{align}
					\Delta V_k &= 2\beta \left( \left|z_{1,k} - \sign{z_{1,k}} \revision{A}\right| - \left| z_{1,k} - h z_{2,k} \right| \right)\nonum
					&\quad - 2 h \beta \sign{z_{1,k}} z_{2,k} - h^2\beta^2 \revision{ + h^2\Delta_k^2 - 2\Delta_k z_{2,k}} \nonum
					&\revision{\leq 2\beta \left( \left|z_{1,k} - \sign{z_{1,k}} \revision{A}\right| - \left| z_{1,k} - h z_{2,k} \right| \right)}\nonum
					&\quad \revision{+ 2L |z_{2,k}|- 2 h \beta \sign{z_{1,k}} z_{2,k} - h^2\beta^2 + h^2L^2}
				\end{align}
				Note that \revision{this upper limit of } $\Delta V_k$ is an even function.
				Therefore, it is sufficient to only consider the case ${z_{1,k} > 0}$.
				\revision{The following shows that $z_{1,k} - A \geq 0$ always holds, as}
				\begin{align}
					&\revision{\left(z_{1,k} + \frac{h^2\alpha^2}{2}\right)^2 \geq \left( \frac{h^4\alpha^4}{4} + h^2\alpha^2 z_{1,k} \right)} \quad
					\Leftrightarrow \quad\revision{z_{1,k}^2 \geq 0,}
				\end{align}
				\revision{which holds $\forall z_{1,k}$.} \revision{Together this yields}
				\begin{align}\label{DeltaVk_case2_undisturbed}
					\Delta V_k &\revision{\leq} 2\beta ( (z_{1,k} - \revision{A}) - \left| z_{1,k} - h z_{2,k} \right| )-2 h \beta z_{2,k} \nonum
					&\quad \revision{ + 2L|z_{2,k}|} - h^2\beta^2 \revision{+ h^2L^2},\qquad
				\end{align}
				and is assumed in the remaining part of the proof.
				\revision{Two cases are distinguishing in the following.}
				
				\begin{subcase}{$z_{1,k} - h z_{2,k} \geq 0$}{case2a} leads to
					\begin{align}
						&\Delta V_k \revision{\leq}2\beta (h z_{2,k} - A - hz_{2,k}) \revision{ + 2L|z_{2,k}|} - h^2 (\beta^2 - L^2) \nonum
						&\quad= - 2 \beta A \revision{ + 2L |z_{2,k}|} - h^2 (\beta^2 - L^2) 
						\leq \revision{2L |z_{2,k}| - h^2 (\beta^2 - L^2),}
					\end{align}
					\revision{which must be negative. For $L=0$ this is fulfilled. For $L>0$ and with $|z_{2,k}| \leq \sqrt{V_k} < \sqrt{V}$ it is sufficient that $2L\sqrt{V} < h^2(\beta^2-L^2)$,}
					\revision{which is fulfilled due to $\beta^2 > L^2+\frac{2L\sqrt{V}}{h^2}$ and thus $\Delta V_k < 0$.}
				\end{subcase}
				
				\begin{subcase}{$z_{1,k} - h z_{2,k} < 0$}{} gives $z_{2,k} > 0$ and
					\begin{align}
						\Delta V_k &\leq2\beta (2 z_{1,k} - A - h z_{2,k})  \revision{ + 2(L-h\beta)z_{2,k}} - h^2\beta^2 \revision{+ h^2L^2} \nonum
						&= 4\beta(z_{1,k}-hz_{2,k})-2\beta A+2Lz_{2,k} - h^2(\beta^2-L^2)\nonum
						&\leq 2Lz_{2,k} - h^2(\beta^2-L^2),
 					\end{align}
 					\revision{which was already shown in Case~\ref{case2a} to be negative.}
				\end{subcase}

			\end{case}
		\end{textcases}

		In the cases above all possible combinations of states $(x_{1,k}, x_{2,k})$ were considered.
		In all cases \revision{but Case~\ref{case_1b}} $(x_{1,k}, x_{2,k}) \notin \mathcal{M}$, $\Delta V_k(x_{1,k}, x_{2,k}) < 0$ was proven. \revision{In Case~\ref{case_1b} $V_{k+2}-V_k < 0$ was proven.
		Therefore, $\Delta V_{2,k} \coloneqq V_{k+2} - V_k = \Delta V_{k+1} + \Delta V_k < 0$. $V_k$ is continuous in ($x_{1,k}, x_{2,k}$) and as $\Delta V_k$ is continuous, also $\Delta V_{2,k}$ is continuous. Due to the continuity 
		the maximum of 
		$\Delta V_{2,k}$ exists. Further $V_k > 0$ $\forall x_k \neq 0$ and $V_k = 0$ for $x_k = 0$. So, $\exists V_M > 0$ such that $x_k \notin \mathcal{M} \Rightarrow V_k > V_M$.  Thus, $\exists \delta \coloneqq \max_{x_k \notin \mathcal{M}, V_k \leq V}(\Delta V_{2,k}) < 0$ and so the maximum number of steps until $\mathcal{M}$ is reached, $(V_0-V_M) / |\delta| < \infty$, is finite. Therefore, $x_k$ converges to $\mathcal{M}$ in a finite number of steps, i.e. in finite time.
	}

	\normalem
	\printbibliography
	
\end{document}